\title{The 2-Attractor Problem is NP-Complete}
\author{Janosch Fuchs}{RWTH Aachen University, Aachen, Germany}{fuchs@algo.rwth-aachen.de}{https://orcid.org/0000-0003-3993-222X}{}
\author{Philip Whittington}{ETH Zürich, Zürich, Switzerland
\footnote{Parts of this work were produced as part of the author's Master's thesis at RWTH Aachen University.}}
{philip.whittington@rwth-aachen.de}{https://orcid.org/0009-0005-0910-6826}{}
\authorrunning{J. Fuchs and P. Whittington}
\keywords{String attractors, dictionary compression, computational complexity} 
\newcommand{\angles}[1]{\langle #1\rangle}
\DeclareMathOperator{\col}{col}
\begin{document}

\maketitle

\begin{abstract}
A \(k\)-attractor is a combinatorial object unifying dictionary-based compression.
It allows to compare the repetitiveness measures of different dictionary compressors such as Lempel-Ziv 77, the Burrows-Wheeler transform, straight line programs and macro schemes. 
For a string \( T \in \Sigma^n\), the \(k\)-attractor is defined as a set of positions \(\Gamma \subseteq [1,n]\), such that every distinct substring of length at most \(k\) is covered by at least one of the selected positions. 
Thus, if a substring occurs multiple times in \(T\), one position suffices to cover it. 
A 1-attractor is easily computed in linear time, while
Kempa and Prezza [STOC 2018] have shown that for \(k \geq 3\), it is NP-complete to compute the smallest \(k\)-attractor by a reduction from \(k\)-set cover. 

The main result of this paper answers the open question for the complexity of the 2-attractor problem, showing that the problem remains NP-complete. 
Kempa and Prezza's proof for \(k \geq 3\) also reduces the 2-attractor problem to the 2-set cover problem, which is equivalent to edge cover, but that does not fully capture the complexity of the 2-attractor problem. 
For this reason, we extend edge cover by a color function on the edges, yielding the colorful edge cover problem.
Any edge cover must then satisfy the additional constraint that each color is represented. 
This extension raises the complexity such that colorful edge cover becomes NP-complete
while also more precisely modeling the 2-attractor problem. 
We obtain a reduction showing \(k\)-attractor to be NP-complete and APX-hard for any \(k \geq 2\). 
\end{abstract}

\section{Introduction}
Compressing text without losing information is usually achieved by exploiting structural properties of the text. 
For example, dictionary-based compression works by removing redundancy resulting from repeatedly occurring substrings. 
Thus, any measurement capturing the repetitiveness of strings is directly related to the performance of dictionary compression techniques. 

In fact, Kempa and Prezza show in \cite{DBLP:conf/stoc/KempaP18} that the solutions of famous compression algorithms, like Lempel-Ziv 77, the Burrows-Wheeler transform or straight-line programs, are approximations of a certain measurement for repetitiveness, the so called \textit{string attractor}, which was introduced in \cite{DBLP:journals/corr/attractors}. 
These results are extended by Kempa and Saha \cite{DBLP:conf/soda/KempaS22} to include the LZ-End compression algorithm proposed by Kreft and Navarro \cite{lzend}, and Kempa and Kociumaka \cite{DBLP:journals/cacm/KempaK22} apply string attractors to resolve the Burrows-Wheeler transform conjecture.

Further, Kempa and Prezza \cite{DBLP:conf/stoc/KempaP18} build a universal data structure based on string attractors supporting random-access on any dictionary compression scheme. 
In \cite{universalindexing}, Navarro and Prezza improve data access when a string attractor is known, showing that the compression-based string attractors suffice to support fast indexed queries, that is, searching for all occurrences of a given pattern in a text. 
Thus, string attractors are not only the basis of dictionary compression, they also allow for a universal indexing data structure that works on top of every dictionary compression scheme. 
Christiansen et al. continue this work in \cite{fastindexing} by constructing an efficient indexing algorithm that is based on the underlying string attractor, but does not need to explicitly compute it, achieving optimal time results for locating and counting indices. 

For a string of length \(n\), an attractor is a set of positions \(\Gamma \subseteq [1,n]\) covering all distinct substrings, that is, every distinct substring has an occurrence crossing at least one of the selected positions. 
If only distinct substrings up to a certain length \(k\) need to be covered, we speak of the \(k\)-attractor. 
As an example, for the string
\[T = a\underline{b}bcab\underline{c}c\underline{a}c \text{,}\]
position 2 covers the substrings \(b\), \(ab\) and \(bb\), and the set of markings \(\Gamma = \{2,7,9\}\) forms a 2-attractor for \(T\). 
It is not a valid 3-attractor and therefore not an attractor because the substrings \(bca\) and \(cab\) are not covered.
Adding either 4 or 5 to \(\Gamma\) results in an attractor, and there is no smaller attractor for \(T\).

In \cite{DBLP:conf/stoc/KempaP18} Kempa and Prezza also show that computing the smallest \(k\)-attractor is NP-complete for any \(k \geq 3\), by giving a reduction from \(k\)-set cover, and extend this proof to non-constant \(k\), especially for \(k = n\).
On the other hand, the problem is trivially solvable in polynomial time for \(k = 1\) by a greedy algorithm. 
The complexity for \(k = 2\) was raised as an open problem. 

Variants of the problem have been introduced, such as the sharp \(k\)-attractor in \cite{DBLP:conf/esa/KempaPPR18}, which only considers distinct substrings of length exactly \(k\), and the circular attractor in \cite{DBLP:journals/tcs/MantaciRRRS21}, which also requires to cover circular substrings. 
Interestingly, the sharp \(k\)-attractor problem can also be reduced from and to \(k\)-set cover and is therefore NP-complete for \(k \geq 3\). However, the sharp 2-attractor problem reduces to 2-set cover, which is equivalent to edge cover and therefore solvable in polynomial time. Notably, the sharp variant does not exhibit the same gap as the \(k\)-attractor problem. 

Mantaci et. al \cite{DBLP:journals/tcs/MantaciRRRS21} take a combinatorial approach and study the attractor sizes of infinite families of words, such as Sturmian words, Thue-Morse words and de Bruijn words.
This lead to Schaeffer and Shallit \cite{DBLP:journals/corr/automatic} raising the definition of the string attractor problem to infinite words by considering automatic sequences and computing attractors of every finite prefix.
Further work on attractors of infinite words, especially those generated by morphisms, has been done by Restivo, Romana, and Sciortino \cite{DBLP:journals/corr/infinite}, Gheeraert, Romana, Stipulanti \cite{DBLP:journals/corr/kbonacci}, and Dvořáková \cite{DBLP:journals/corr/episturmian}.

Akagi, Funakoshi, Inenaga \cite{Akagi2021SensitivityOS} analyze the sensitivity of an attractor, i.e.,  how much can editing a single position change the result.
Bannai et. al \cite{DBLP:conf/esa/Bannai0IKKN22} formulate the attractor and other dictionary compressors as instances of the maximum satisfiability problem and present computational studies showing that an attractor can be computed in reasonable time with this approach. 

A more recent development in the field of data compression is the \emph{relative substring complexity} measure \(\delta\) \cite{relativeSubstringComplexity}, which counts the number of different substrings of length \(l\) and scales it by \(l\).
It is efficient to compute and also smaller than the size of the optimal string attractor by up to a logarithmic factor, but allows to use the results on string attractors with that overhead.

Our result is obtained by a technique in which we make a problem \emph{colorful} by adding colors to its edges (or vertices) and requiring all colors to appear in a solution. 
This idea or similar versions of it are spread out throughout the literature and thus also known by other names such as \emph{labeled}, \emph{rainbow}, \emph{tropical}, or \emph{color-spanning}.
It is mostly used in relation to paths in the graph \cite{Bentert_Kellerhals_Niedermeier_2023, temporalcolorful, akbari} and matching problems \cite{DBLP:journals/ipl/Monnot05, BUSING2018245, COHEN2017219, BEREG201926}.

\subsection{Our Contributions}
The main result of this paper answers the open question for the complexity of the $2$-attractor problem, showing that the problem remains NP-complete and thus closing the last remaining gap in the complexity analysis of the \(k\)-attractor problem. 

We introduce a more general version of the \(k\)-attractor, where the input is a set of strings and each distinct substring only needs to be covered by an attractor position in at least one of the input strings. 
We call this a \(k\)-set attractor and show that it can be simulated by a single \(k\)-attractor, showing the equivalence of the two problems. 
The ability to construct multiple strings as input makes the proof of the main result more convenient. 

Kempa and Prezza \cite{DBLP:conf/stoc/KempaP18} give reductions for \(k\)-attractor from \(k\)-set cover and to \(k(k+1)/2\)-set cover, placing 2-attractor between 2-set cover and 3-set cover. 
A \(k\)-set cover reduction is also used in \cite{DBLP:conf/esa/KempaPPR18} to show that sharp 2-attractor is solvable in polynomial time.
Essentially, both the \(k\)-attractor and the less restrictive sharp \(k\)-attractor problem are reduced from \(k\)-set cover, indicating that some of the complexity of the \(k\)-attractor is lost in the process.
For this reason, we further investigate the relation between 2-attractor and 2-set cover problem, which is equivalent to the edge cover problem.

Thus, we extend the edge cover problem with a color function on the edges, yielding the colorful edge cover problem. 
Any edge cover must then satisfy the additional constraint that each color is represented. 
The complexity of edge cover combined with the additional complexity from the colorfulness condition raises the complexity of colorful edge cover to NP-complete, although the problems each on their own are solvable in polynomial time. 
The hardness is shown by a reduction from a variable-bounded SAT variant. 
The colors are used to model a truth assignment of the variables and the edge cover condition verifies that this assignment is satisfying.
The colorful edge cover problem captures the constraints of the 2-attractor problem more tightly.
We later use this result to extend the structure of the reduction to the \(k\)-attractor problem. 
With the reduction we not only show the NP-hardness of the 2-attractor problem, we also obtain an APX-hardness result.

Our paper is organized as follows, we introduce in \Cref{sec:SetString} the set of strings attractor problem and discuss its relation to the already introduced variations, showing that it is as hard as the classical \(k\)-string attractor problem. 
In \Cref{sec:colorful} we define the colorful edge cover problem and show that it is NP-complete. 
The main result, the NP-completeness of the \(2\)-attractor problem, is presented in \Cref{sec:2attractor}. 
Afterwards, in \Cref{sec:APX}, we discuss the implicit APX-hardness that results from our reduction. 

\section{Attractors of Strings, Circular Strings, and Sets of Strings}
\label{sec:SetString}

We start with the definition of the \(k\)-attractor and explain already introduced variations before we define the \(k\)-set attractor. 
Afterwards, we discuss how to solve the corresponding problems in polynomial time, if there exists an algorithm that solves one of the problems in polynomial time.
Thus, we show that the problems are equivalent in their complexity. 

\begin{definition}[\(k\)-attractor \cite{DBLP:conf/stoc/KempaP18}] 
  A set \(\Gamma \subseteq [1,n]\) is a \(k\)-attractor of a string \(T \in \Sigma^n\) if every substring \(T[i\dots j]\) such that \(i \leq j < i+k\) has an occurrence \(T[i'\dots j']\) with \(j'' \in [i'\dots j']\) for some \(j'' \in \Gamma\).
\end{definition}

A solution \(\Gamma\) is called a \textit{string attractor} or simply \textit{attractor} if \(k=n\). 
The corresponding optimization and decision problems are the minimum-\(k\)-attractor problem and the \(k\)-attractor problem. 

For the circular \(k\)-string attractor problem, the input string is circular, resulting in additional substrings starting at the last letters of the input string and continuing at the beginning. 
Thus, there are more substrings that need to be covered compared to the \(k\)-string attractor problem. 
However, these additional substrings can make the solution smaller. 

It is convenient for the proof of \Cref{theorem:att} to allow \(k\)-attractors over multiple strings. 
This does not change the problem much, in fact \(k\)-set attractors can be easily simulated by unique delimiter symbols. 
We formally define a set attractor over \(m\) strings of possibly different lengths \(n_1, \dots n_m\) as a set of tuples, with the first entry of the tuple denoting the string and the second entry denoting the position.

\begin{definition}[\(k\)-set attractor] 
  A set \(\Gamma \subseteq \bigcup_{x=1}^m \bigcup_{y=1}^{n_x} \{(x,y)\} \) is a \(k\)-set attractor of a set of \(m\) strings \(\mathbf{T} = \{T_1, \dots T_m\}\) with \(T_x \in \Sigma^{n_x}\) if every substring \(T_x[i\dots j]\) such that \(i \leq j < i+k \) has an occurrence \(T_{x'}[i'\dots j']\) with \(j'' \in [i'\dots j']\) for some \((x', j'') \in \Gamma\).
\end{definition}

A string is a circular string cut once, applying more cuts gives a set of strings.
Therefore, if we can describe the behavior of attractors under cuts, we can show
those three to be the same. 
The equivalence of circular \(k\)-attractors and \(k\)-attractors is already shown in \cite{DBLP:journals/tcs/MantaciRRRS21}. 
We extend the idea of adding a delimiter, which must be part of any solution, to remove the impact of circularity and thereby acting as a cut, i.e., substrings that stretch over the delimiter are already covered. 
The remaining substrings are then the same as the substrings of two separate strings.

\begin{lemma}\label{lemma:setequiv}
    An algorithm solving the \(k\)-attractor problem can solve the \(k\)-set-attractor problem with linear overhead in the input size, and vice versa.
\end{lemma}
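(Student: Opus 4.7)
The plan is to give explicit, size-linear reductions in both directions.

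The substantive direction reduces $k$-set-attractor to $k$-attractor. Given an instance $\mathbf{T} = \{T_1, \dots, T_m\}$ over an alphabet $\Sigma$, I would introduce $(k-1)(m-1)$ fresh delimiter symbols, all pairwise distinct and outside $\Sigma$, and form the single string $T = T_1\, D_1\, T_2\, D_2 \cdots D_{m-1}\, T_m$, where each $D_i$ is a word of $k-1$ of those fresh letters. Call the positions falling inside some $D_i$ the \emph{delimiter positions}; the remaining positions of $T$ are in a canonical bijection with the set-positions $(x,y)$. Two observations drive the argument: first, because each delimiter letter occurs exactly once in $T$, every $k$-attractor of $T$ is forced to contain all delimiter positions; second, since $|D_i|=k-1$, any substring of $T$ of length at most $k$ either lies entirely within one $T_x$, or contains a delimiter letter.

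From these two observations, the equivalence follows directly. Given a $k$-attractor $\Gamma$ for $T$, stripping the delimiter positions and mapping the remainder back to set-positions yields a $k$-set-attractor for $\mathbf{T}$, since every internal substring of some $T_x$ has an occurrence in $T$ that is itself internal (an occurrence crossing a $D_i$ would contain a delimiter letter, which no $T_y$ has), so the covering position in $\Gamma$ lies within some $T_y$. Conversely, any $k$-set-attractor $\Gamma'$ for $\mathbf{T}$, lifted to $T$ and augmented with all delimiter positions, is a $k$-attractor for $T$. Hence the optimal sizes differ by exactly $(k-1)(m-1)$, so running the $k$-attractor algorithm on $T$ and stripping delimiters solves $k$-set-attractor optimally. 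The length of $T$ is $\sum_x n_x + (k-1)(m-1)$, giving linear overhead. The reverse direction is immediate: a $k$-attractor instance $(T,k)$ is the special case $(\{T\},k)$ of the $k$-set-attractor problem, with no transformation required.

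The main subtlety I anticipate is justifying that $k-1$ delimiters per boundary are the right number: with fewer, a length-$k$ substring could straddle two consecutive strings without hitting any delimiter letter, breaking the clean dichotomy between internal and delimiter-containing substrings that the rest of the argument rests on. Everything after that is bookkeeping on the bijection between non-delimiter positions of $T$ and set-positions of $\mathbf{T}$.
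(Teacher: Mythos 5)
Your proof is correct and takes the same overall approach as the paper: stitch the strings together with fresh delimiter symbols that any attractor is forced to mark, and observe that the forced delimiter markings absorb exactly the boundary-crossing substrings.

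The one place where you diverge, and where your reasoning actually goes wrong (though the construction still works), is the choice of $k-1$ delimiter symbols per boundary. You justify this by worrying that with fewer delimiters, ``a length-$k$ substring could straddle two consecutive strings without hitting any delimiter letter.'' That cannot happen with even a single delimiter symbol per boundary: a substring of $T$ is contiguous, so if it contains a letter of $T_x$ and a letter of $T_{x+1}$, it must contain the entire delimiter block between them, however short. The dichotomy you rely on (every substring of length $\le k$ either lies inside one $T_x$ or contains a delimiter letter) already holds with $|D_i| = 1$, and the paper uses exactly this: $T = T_1\,\#_1\,T_2\,\#_2\cdots\#_{m-1}\,T_m$, with one fresh symbol per boundary. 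The padding to $k-1$ is the kind of move you would need if you wanted a single reusable delimiter symbol to separate both sides, but here every delimiter letter is globally unique, so a width-one wall already has no gaps. Beyond being unnecessary, the inflated construction also technically undercuts the ``linear overhead'' claim: your $T$ has length $\sum_x n_x + (k-1)(m-1)$, which is linear only when $k$ is a constant, whereas the paper's $\sum_x n_x + (m-1)$ is linear unconditionally (the paper also proves this lemma for $k$ up to $n$). For $k=2$, which is what the main theorem needs, the two constructions coincide.
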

\begin{proof}
    Given a set of strings \(\mathbf{T} = \{T_1, \dots, T_m\}\) over an alphabet \(\Sigma\), create \(m-1\) new symbols \(\#_1, \dots \#_{m-1} \not\in\Sigma\) and use them to stitch the strings together as \(T = T_1 \#_1 T_2 \#_2 \dots \#_{m-1} T_m\). Then, \(T\) has a \(k\)-attractor of size \(p+m-1\) if and only if \(\mathbf{T}\) has a \(k\)-set attractor of size \(p\), because any attractor for \(T\) has to mark each position of the unique delimiter symbols. The remaining markings induce a \(k\)-set attractor for \(\mathbf{T}\), and a \(k\)-set attractor combined with those markings yields a \(k\)-attractor for \(T\). 
    
    For the other direction, given a string \(T\) just use the singleton \(\{T\}\) as input for \(k\)-set attractor problem.
\end{proof}

\Cref{attractor_reductions} shows the combined results of \cite{DBLP:journals/tcs/MantaciRRRS21} and our proof. 
Given an input string \(T\), \(T^*\) or \(\mathbf{T}\) and solution size \(p\) for the \(k\)-attractor, circular \(k\)-attractor or \(k\)-set attractor problem and a function solving one of the problems, potentially a different one, the corresponding entry describes how to modify the input to decide the problem corresponding to the input with the given function.
Transforming a circular string into a set of strings or vice versa is done using strings as an intermediate step.
A variant of the problem on a set of circular strings is also equivalent by transforming each circular string into a string with the given operations, obtaining a set of strings
which can be further transformed as desired.

\begin{table}[!t]
\renewcommand{\arraystretch}{1.3}
\centering
\begin{tabular}{r|c c c}
input & \(\angles{T, p}\) & \(\angles{T^*, p}\) & \(\angles{\mathbf{T}, p}\)\\
\hline
\hline
string & \(\angles{T, p}\) & \(\angles{T^*T^*T^*, p}\) 
& \(\angles{T_1 \#_1 T_2 \dots \#_{m-1} T_m , p+m-1}\)  \\
\hline
circular & \(\angles{T\#, p+1}\) & \(\angles{T^*, p}\) 
& \(\angles{T_1 \#_1 T_2 \dots \#_{m-1} T_m \#_m, p+m}\) \\
\hline
set & \(\angles{\{T\}, p}\) & \(\angles{\{T^*T^*T^*\}, p}\) 
& \(\angles{\mathbf{T}, p}\) \\
\end{tabular}
\bigskip
\caption{Equivalence of different attractor problems.}
\label{attractor_reductions}
\end{table}

\section{The Colorful Edge Cover Problem}
\label{sec:colorful}

The key idea behind colorfulness is to extend a problem in P, i.e. edge cover, by a color function on the set of solution elements to raise the complexity to NP-complete.
Any solution to the initial problem must then satisfy the additional condition that each color is represented. The colors can be used to model guessing a certificate for an NP-complete problem, and the structure of the initial problem is used to verify that certificate.

We define an edge coloring on a set of colors \(\mathcal{C}\) as a surjective function \(\col: E \to \mathcal{C}\).
It is required that \(\col\) is surjective to avoid trivially unsolvable instances. 
Note that this is different from the edge coloring problem where the colors are subject to the constraint that no two edges of the same color are adjacent.
The set of edges \(E\) is allowed to contain self-loops, i.e., an edge of the form \(\{v,v\}\). 
Normally, self-loops are not of interest for the edge cover problem, because they only cover one vertex, making every other adjacent edge more desirable. 
However, the additional colorfulness constraint can make the self-loops necessary as part of an optimal solution. 

\begin{definition}[Colorful Edge Cover] 
  For an undirected edge-colored graph \(G=(V,E,\col)\), with \(\col: E \to \mathcal{C}\),
  a subset \(E' \subseteq E\) is called a colorful edge cover of \(G\),
  if for each vertex \(v \in V\) there is an edge \(\{v,w\} \in E'\),
  and for each color \(c \in \mathcal{C}\) there is
  an edge \(e \in E'\) with \(\col(e) = c\).
\end{definition}

By minimum colorful edge cover, we denote the optimization problem of finding a smallest colorful edge cover. 
The set \(\{\angles{G, p} : G \text{ has a colorful edge cover of size } p\}\) defines the corresponding decision problem, the colorful edge cover problem. 
Before we show its hardness, we prove that any algorithm that solves the minimum colorful edge cover problem on simple graphs, also solves the problem on graphs with self-loops. 
We achieve this by constructing a gadget of constant size replacing all self-loops. 

\begin{lemma}
    An algorithm solving the colorful edge cover problem on simple graphs also solves the colorful edge cover problem on graphs with self-loops.
\end{lemma}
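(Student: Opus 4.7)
The plan is to replace every self-loop with a small simple-graph gadget that faithfully encodes the same trade-off, incurring only a uniform size offset depending on the number of self-loops. Concretely, for each self-loop $\ell_i = \{v_i, v_i\}$ with color $c_i$, I would introduce two fresh vertices $u_i^1, u_i^2$ together with two new edges: a \emph{bridge} edge $\{v_i, u_i^1\}$ retaining the color $c_i$, and a \emph{pendant} edge $\{u_i^1, u_i^2\}$ painted with a brand-new color $c_i'$ used nowhere else. Deleting all self-loops from $G$ and inserting these gadgets yields a simple graph $G'$; the transformation is clearly polynomial.

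The key observation driving the correctness is that each pendant edge $\{u_i^1, u_i^2\}$ is forced into every colorful edge cover of $G'$: it is the unique edge incident to the degree-one vertex $u_i^2$ and simultaneously the unique edge of color $c_i'$. Hence every cover of $G'$ contains a mandatory base of $k = |L|$ edges. The optional bridge edge $\{v_i, u_i^1\}$ then exactly mirrors the role of the self-loop $\ell_i$ in $G$: including it covers $v_i$ and witnesses color $c_i$ at cost one, while omitting it leaves these two responsibilities to the remaining original edges — precisely the choice the self-loop offered.

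I would then formalize the correspondence ``$G$ has a colorful edge cover of size $p$ iff $G'$ has one of size $p+k$''. In the forward direction, starting from a cover $E^*$ of $G$, swap each $\ell_i \in E^*$ for the bridge $\{v_i, u_i^1\}$ and add all pendants; vertex- and color-coverage transfer directly, and the new vertices and colors $c_i'$ are handled by the pendants. In the backward direction, every cover $E''$ of $G'$ contains all $k$ pendants; removing them and swapping each selected bridge $\{v_i, u_i^1\}$ back for $\ell_i$ produces a valid cover of $G$ of size $|E''|-k$, again because $v_i$ and $c_i$ are covered or witnessed in parallel by $\ell_i$ and $\{v_i, u_i^1\}$.

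The main obstacle to avoid is the tempting but flawed one-edge gadget that simply attaches a single pendant $\{v_i, u_i\}$ colored $c_i$. That gadget forces the edge into every cover, which is benign when the original optimum used $\ell_i$ but adds an unwanted $+1$ whenever $\ell_i$ was unused — and one does not know a priori which loops the optimum picks, so the offset becomes instance-dependent. Splitting the gadget into two edges and introducing the fresh color $c_i'$ decouples the forced contribution (purely needed to tidy up the gadget's own new vertices and new color) from the optional contribution that genuinely encodes the self-loop's role, producing the clean additive $+k$ offset needed for the reduction.
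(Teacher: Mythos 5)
Your proof is correct, and it follows the same essential strategy as the paper: replace each self-loop by a ``bridge'' edge to an auxiliary vertex, and force a separate ``tidy-up'' edge of a fresh color into every cover so that the bridge plays exactly the role the self-loop did. The one difference is economy. You build a \emph{separate} gadget per self-loop (two new vertices $u_i^1, u_i^2$, two new edges, one new color $c_i'$), giving a size offset of $+k$ where $k$ is the number of self-loops. The paper instead uses a \emph{single shared} gadget: two new vertices $x,y$ joined by one edge of a single new color $b$, and every self-loop $\{v,v\}$ of color $a$ is rerouted as $\{v,x\}$ of color $a$. Since $\{x,y\}$ is the only edge of color $b$ (and the only edge touching $y$), it is forced, so $x$ is always covered for free and each rerouted edge contributes exactly coverage of $v$ and its original color, just like the self-loop. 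This yields a constant $+1$ offset, one new color, and two new vertices total, regardless of how many self-loops there are. Both reductions are valid; the shared-gadget version is just tighter. Your closing discussion correctly identifies why the pendant must carry a fresh color (otherwise the forced edge would cover an original color ``for free'' and break the equivalence), which is the same reason the paper's $\{x,y\}$ edge needs the new color $b$.
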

\begin{proof}
    Self-loops can be simulated by a gadget consisting of two new vertices and a new color.
    Introduce a new color \(b\) and two new vertices \(x,y\) that are connected by an edge of color \(b\).
    For any vertex \(v\) with a self-loop of color \(a\), instead connect \(v\) and \(x\) with color \(a\).
    Any valid solution contains the unique edge of color \(b\) such that \(x\) and \(y\) are always covered. 
    Choosing the edge \(\{v, x\}\) then only covers \(v\) and color \(a\), which is the same behaviour as for the self-loop.
    The resulting graph does not contain self-loops and has a colorful edge cover on \(\mathcal{C}+1\) colors of size \(p+1\) if and only if the original graph has a colorful edge cover on \(\mathcal{C}\) colors of size \(p\).
\end{proof}

In the following, we introduce a special, balanced version of the satisfiability problem that enables us to show the NP-hardness of the colorful edge cover problem. 

\begin{definition}[\((3,B2)\)-SAT]
  A Boolean formula \(\phi = \bigwedge_{i=1}^{m} c_i =
  \bigwedge_{i=1}^{m} \bigvee_{j=1}^{L(c_i)} l_{i,j}\) over the variables 
  \(\mathbf{X_n} = \{x_1, \dots, x_n\} \) is a \((3,B2)\)-SAT instance if each clause consists of
  exactly three literals and each literal occurs exactly two times, thus each variable occurs exactly four times.
  We denote the satisfiability problem for \((3,B2)\)-SAT instances by \((3,B2)\)-SAT.
\end{definition}

By a result from Berman, Karpinski and Scott \cite{APXbound}, \((3,B2)\)-SAT is NP-complete.

\begin{theorem}\label{theorem:cec}
  The colorful edge cover problem is NP-complete.
\end{theorem}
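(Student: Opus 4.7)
The proof will be in two parts: membership in NP and NP-hardness. Given a candidate edge subset \(E'\), one can verify in polynomial time both that every vertex of \(G\) is incident to an edge of \(E'\) and that every color of \(\mathcal{C}\) occurs on some edge of \(E'\), so the problem lies in NP.

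For the hardness I plan to reduce from \((3,B2)\)-SAT, which is NP-complete by the cited result of Berman, Karpinski, and Scott. Given a formula \(\phi\) with variables \(x_1,\dots,x_n\) and clauses \(c_1,\dots,c_m\), the plan is to build in polynomial time a colored graph \(G_{\phi}=(V,E,\col)\) together with a threshold \(p\) so that \(\phi\) is satisfiable if and only if \(G_{\phi}\) admits a colorful edge cover of size at most \(p\). For each variable \(x_i\) I would introduce a dedicated color \(c_i\) and a small \emph{variable gadget} \(H_i\), together with four literal-occurrence vertices \(u_{i,1}^{+},u_{i,2}^{+},u_{i,1}^{-},u_{i,2}^{-}\) shared with the clause side (the balanced structure of \((3,B2)\)-SAT guarantees exactly two positive and two negative occurrences per variable). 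The gadget \(H_i\) should be engineered so that, subject to covering its private vertices and realizing color \(c_i\), it has exactly two cheap configurations: a \emph{true configuration} that internally absorbs \(u_{i,1}^{-},u_{i,2}^{-}\) while leaving \(u_{i,1}^{+},u_{i,2}^{+}\) to be covered from outside, and a symmetric \emph{false configuration} obtained by swapping the roles of positive and negative occurrences. Any mixed configuration should pay at least one extra edge. For each clause \(c_j\) I would add a clause vertex \(v_j\) joined to the three literal-occurrence vertices of its literals, so that covering \(v_j\) forces a literal edge into the solution. The threshold \(p\) is set to the number of edges spent by the canonical solution that takes one true/false configuration per variable and one literal edge per clause.

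The forward direction of the correctness proof is then immediate: from a satisfying assignment, put each \(H_i\) in its true- or false-configuration according to the assignment and, for each clause, pick a single literal edge to an occurrence that makes the clause true; this uses exactly \(p\) edges. The backward direction, which I expect to be the bulk of the argument, is a local charging computation: given any colorful edge cover of size at most \(p\), one argues gadget-by-gadget that the edges inside \(H_i\) must realize one of the two cheap configurations (since a mixed configuration wastes at least one edge and there is no slack in the budget), and then that every clause vertex must be covered by a literal edge consistent with the chosen configuration, because an inconsistent literal edge would double-cover a literal-occurrence vertex and again exceed the budget. Reading off the truth value from the configuration of each \(H_i\) then yields a satisfying assignment of \(\phi\).

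The main obstacle is the design of \(H_i\) itself. The gadget must use the colorfulness constraint\,---\,rather than the ordinary edge-cover constraint\,---\,to force a binary choice, it must make the two cheap configurations differ in exactly which literal-occurrence vertices they leave free, and it must make every mixed configuration strictly costlier so that the tight budget \(p\) cannot be met inconsistently. The preceding self-loop lemma is convenient here, since a self-loop of color \(c_i\) attached to a gadget vertex can represent a forced choice, or single-handedly satisfy the color constraint, without introducing auxiliary vertices that would inflate the cover count and complicate the tight budgeting; the resulting simple-graph instance is recovered by the gadget from the lemma.
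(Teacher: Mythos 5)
Your NP-membership argument and choice of source problem, \((3,B2)\)-SAT, match the paper's, and the high-level strategy --- a per-variable color forcing a binary choice inside a variable gadget, with clause vertices enforcing that the induced assignment is satisfying --- is also the same. But you explicitly leave the gadget \(H_i\) unconstructed and call it ``the main obstacle,'' so the proof is incomplete, and the specification you give for \(H_i\) does not appear to be achievable as stated. If each gadget internally covers only two of its four literal-occurrence vertices and leaves the other two to be covered from the clause side, then \(2n\) occurrence vertices must be reached by clause edges; but your budget charges exactly one literal edge per clause, i.e.\ \(m = 4n/3\) edges, and \(4n/3 < 2n\), so some occurrence vertices would necessarily remain uncovered within budget. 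Either that, or the budget is not tight, in which case the backward argument (reading a satisfying assignment off any cheap cover because no edge can be wasted) collapses.

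The paper resolves exactly this counting obstacle by inserting an \emph{intermediate} vertex \(c_{i,j}\) between each clause vertex \(c_i\) and literal vertex \(l_{i,j}\), with all edges \(\{c_i,c_{i,j}\}\) and \(\{c_{i,j},l_{i,j}\}\) carrying a single filler color \(0\), plus one edge of color \(a\) between the two positive and one between the two negative literal vertices of each variable \(x_a\). Because the \(3m=4n\) intermediate vertices are pairwise non-adjacent and see only color-\(0\) edges, and the \(n\) non-zero colors each demand a disjoint edge, every colorful edge cover has size at least \(5n\), and there are exactly two ways to spend \(5\) edges per variable: the gadget always covers all of its literal and intermediate vertices, and the truth value is encoded in \emph{which} color-\(a\) edge is chosen, which in turn determines which two of the four neighboring clause vertices are reached by \(\{c_i,c_{i,j}\}\) edges. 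In other words, only clause-vertex coverage varies between the two configurations, not literal-vertex coverage as in your sketch. That intermediate layer and the associated tight lower-bound count are the missing pieces in your proposal.
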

\begin{proof}
  The problem is in NP, as a subset \(E' \subseteq E\) forming a colorful edge cover can be encoded and verified in polynomial space and time.
  We show its hardness by a reduction from \((3,B2)\)-SAT. 

  Given a \((3,B2)\)-SAT formula \(\phi\), for each clause \(c_i = \bigvee_{j=1}^{3} l_j\), construct a clause vertex \(c_i\), and for each \(j \in [1,3]\) construct an intermediate vertex \(c_{i,j}\) connected to \(c_i\), and a literal vertex \(l_{i,j}\) connected to \(c_{i,j}\). 
  All those edges are assigned the color 0.
  Further, for each variable \(x_a\), connect the two vertices \(l_{i_1,j_1}, l_{i_2,j_2}\) corresponding to its positive literal with an edge of color \(a\), and the two vertices \(l_{i_3,j_3}, l_{i_4,j_4}\) corresponding to its negative literals with another edge of color \(a\).
  \Cref{reduction:cec} shows the subgraph for a variable \(x_a\).
  Note that the clause vertices of the form \(c_i\) have two more neighbors \(c_{i,j'}, c_{i,j''}\) not depicted here as they also belong to the gadgets of their other literals. 
  The dashed lines form a solution representing a false assignment to the variable \(x_a\), whereas the solid lines represent setting the variable to true.
  We claim that the constructed graph has a colorful edge cover of size \(n + |\phi| = 5n\)
  if and only if \(\phi\) is satisfiable.

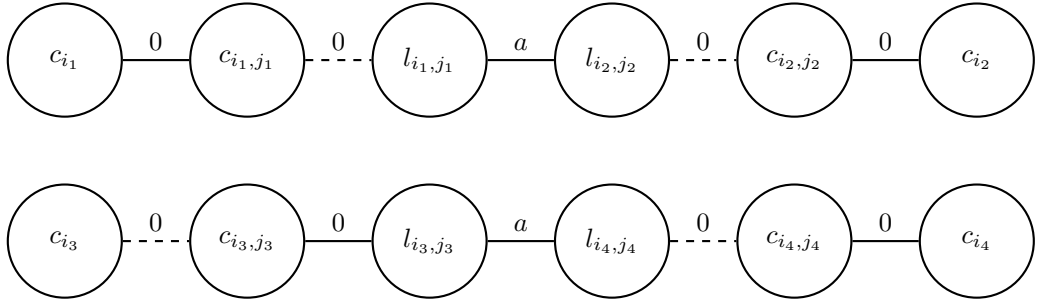
\begin{figure}
    \centering
    \begin{tikzpicture}[node distance={24mm}, thick, main/.style = {draw, circle, minimum size=15mm}]
        \node[main] (1) {$c_{i_1}$}; 
        \node[main] (2)[right of=1] {$c_{{i_1},{j_1}}$}; 
        \node[main] (3)[right of=2] {$l_{{i_1},{j_1}}$}; 
        \node[main] (4)[right of=3] {$l_{{i_2},{j_2}}$}; 
        \node[main] (5)[right of=4] {$c_{{i_2},{j_2}}$}; 
        \node[main] (6)[right of=5] {$c_{i_2}$}; 
        \draw (1) -- (2) node[draw=none,fill=none,midway,above] {\(0\)}; 
        \draw[dashed] (2) -- (3) node[draw=none,fill=none,midway,above] {\(0\)}; 
        \draw(3) -- (4) node[draw=none,fill=none,midway,above] {\(a\)}; 
        \draw[dashed] (4) -- (5) node[draw=none,fill=none,midway,above] {\(0\)}; 
        \draw (5) -- (6) node[draw=none,fill=none,midway,above] {\(0\)}; 
        \node[main] (7)[below of=1] {$c_{i_3}$}; 
        \node[main] (8)[right of=7] {$c_{{i_3},{j_3}}$}; 
        \node[main] (9)[right of=8] {$l_{{i_3},{j_3}}$}; 
        \node[main] (10)[right of=9] {$l_{{i_4},{j_4}}$}; 
        \node[main] (11)[right of=10] {$c_{{i_4},{j_4}}$}; 
        \node[main] (12)[right of=11] {$c_{i_4}$}; 
        \draw[dashed] (7) -- (8) node[draw=none,fill=none,midway,above] {\(0\)}; 
        \draw (8) -- (9) node[draw=none,fill=none,midway,above] {\(0\)}; 
        \draw(9) -- (10) node[draw=none,fill=none,midway,above] {\(a\)}; 
        \draw[dashed] (10) -- (11) node[draw=none,fill=none,midway,above] {\(0\)}; 
        \draw (11) -- (12) node[draw=none,fill=none,midway,above] {\(0\)}; 
    \end{tikzpicture}
    \caption{Gadgets for a variable \(x_a\) in the colorful edge cover reduction.}
    \label{reduction:cec}
\end{figure}

  Assume a colorful edge cover of size \(n + |\phi| = 5n\) exists.
  These costs are always a lower bound for the colorful edge cover, as there has to be
  one chosen edge for each color other than 0, and one edge for each intermediate vertex \(c_{i,j}\),
  as they are pairwise non-adjacent and all their incident edges have color 0.
  
  Another way to see this is to consider the vertices unique to each variable \(x_a\),
  which are four pairs of the form \(c_{i,j}\), \(l_{i,j}\).
  Because only the edges \(\{l_{i_1,j_1}, l_{i_2,j_2}\}\) and \(\{l_{i_3,j_3}, l_{i_4,j_4}\}\) have the color \(a\), it is not possible to cover all the intermediate vertices \(c_{i,j}\) and the color with less than five edges.
  The dashed and solid lines in \Cref{reduction:cec} each refer to one way to cover all unique elements with five edges, while also covering either \(c_{i_1}\) and \(c_{i_2}\) or \(c_{i_3}\) and \(c_{i_4}\). 
  Without loss of generality, we assume that the edges adjacent to the included edge of color \(a\) are not included, so our solution follows this form.
  Then, a clause vertex \(c_i\) is only covered by an edge \(\{c_i, c_{i,j}\}\) if the edge \(\{c_{i,j}, l_{i,j}\}\) is not included.
  In turn, this means that the edge of color \(a\) incident to \(l_{i,j}\) is included, indicating that variable \(X_a\) is assigned a truth value that satisfies \(c_i\).
 
  Given a satisfying assignment of \(\mathbf{X_n}\) for \(\phi\), construct a colorful edge cover for 
  \(G\) as follows. For each variable \(x_a\), consider the two gadgets 
  \(c_{i_1} c_{i_1,j_1} l_{i_1,j_1} l_{i_2,j_2} c_{i_2,j_2} c_{i_2}\) and \(c_{i_3} c_{i_3,j_3} l_{i_3,j_3} l_{i_4,j_4} c_{i_4,j_4} c_{i_4}\).
  If \(x_a\) is positive, choose the edges \[\{c_{i_1}, c_{i_1,j_1}\}, \{l_{i_1,j_1}, l_{i_2,j_2}\}, \{c_{i_2,j_2}, c_{i_2}\}, \{c_{i_3,j_3}, l_{i_3,j_3}\}, \{l_{i_4,j_4} c_{i_4,j_4}\}.\] 
  Note that \(\{l_{i_1,j_1}, l_{i_2,j_2}\}\) has color \(a\).
  If \(x_a\) is negative, choose the edges \[\{c_{i_1,j_1}, l_{i_1,j_1}\}, \{l_{i_2,j_2}, c_{i_2,j_2}\},
  \{c_{i_3}, c_{i_3,j_3}\}, \{l_{i_3,j_3}, l_{i_4,j_4}\}, \{c_{i_4,j_4} c_{i_4}\}.\] 
  Note that \(\{l_{i_3,j_3}, l_{i_4,j_4}\}\) has color \(a\).
  In both cases, we infer a cost of \(5n\). It is clear for all vertices except the clause 
  vertices \(c_i\) that they are covered. Because \(\phi\) is satisfied by the given 
  assignment, any clause \(c_i\) is satisfied by some literal \(l_j\), so by our choice
  the edge \(\{c_i, c_{i,j}\}\) is included, covering \(c_i\). Further, the color 0 is covered
  \(4n\) times, and each other color is covered exactly once.
\end{proof}

\section{NP-Completeness of the 2-Attractor Problem}
\label{sec:2attractor}
We now give a formal definition of a graph interpretation of strings and their substrings of length 2 which we call the \textit{2-substring graph}. 
Each position in a string corresponds to an edge in this graph.
This interpretation was used to show that computing sharp 2-attractors can be done in time \(\mathcal{O}(n\sqrt{n})\) \cite{DBLP:conf/esa/KempaPPR18} by solving the edge cover problem on this graph. 
However, we additionally use symbols to label the edges, yielding instances of the colorful edge cover problem instead. 

\begin{definition}
   Given a set of strings \(\mathbf{T} = \{T_1, \dots T_m\}\) with \(T_i \in \Sigma^{n_i}\), we define the 2-substring graph of \(\mathbf{T}\) by \(G(\mathbf{T}) = (V, E = E_1 \cup E_2 \cup E_3, \sigma)\)
   with \(V = \{xy \in \Sigma^2 \mid xy \text{ is a substring of any }\allowbreak T_i \in \mathbf{T}\} \),
   \begin{align*}
   E_1 =&  \{(xy,yz) \in \left(\Sigma^2\right)^2 \mid xyz \text{ is a substring of any } T_i \in \mathbf{T}\} \\ 
   E_2 =& \{(xy,xy) \in \left(\Sigma^2\right)^2 \mid xy \text{ is the prefix of any } T_i \in \mathbf{T} \} \\ 
   E_3 =& \{(yz,yz) \in \left(\Sigma^2\right)^2 \mid yz \text{ is the suffix of any } T_i \in \mathbf{T} \}
   \end{align*}
   and  
   \(\sigma: E \to \Sigma\) is a labeling function on the edges defined by 
   \[
    \sigma(e) = 
     \begin{cases}
       y, &\text{if } e = (xy,yz) \in E_1 ,\\
       x, &\text{if }  e = (xy,xy) \in E_2 ,\\
       z, &\text{if }  e = (yz,yz) \in E_3 .\\
     \end{cases}\]
\end{definition}

\begin{figure}
    \centering
    \begin{tikzpicture}[node distance={20mm}, thick, main/.style = {draw, circle, minimum size=10mm}]
        \node[main] (1) {$a b$}; 
        \node[main] (2)[right of=1] {$b b$}; 
        \node[main] (3)[right of=2] {$b c$}; 
        \node[main] (4)[right of=3] {$c d$}; 
        
        \node[main] (5)[right of=4] {$c a$}; 
        
        \node[main] (6)[right of=5] {$d e$}; 
        \node[main] (7)[right of=6] {$e c$}; 
        
        \draw[dashed,->](1) to [out=60,in=120,looseness=5] node[draw=none,fill=none,above] {\(a\)} (1) ;
        \draw[->] (1) -> node[draw=none,fill=none,midway,above] {\(b\)} (2); 
        \draw[->](2) to [out=60,in=120,looseness=5] node[draw=none,fill=none,above] {\(b\)} (2) ;
        \draw[->] (2) -- node[draw=none,fill=none,midway,above] {\(b\)} (3); 
        \draw[->] (3) -- node[draw=none,fill=none,midway,above] {\(c\)} (4) ; 
        \draw[dotted, ->](4) to [out=60,in=120,looseness=5] node[draw=none,fill=none,above] {\(d\)} (4) ;
        
        \draw[dashed,->](3) to [out=60,in=120,looseness=5] node[draw=none,fill=none,above] {\(b\)} (3) ;
        \draw[->] (3) to [out=300,in=240] node[draw=none,fill=none,midway,above] {\(c\)} (5);  
        \draw[dotted, ->](5) to [out=60,in=120,looseness=5] node[draw=none,fill=none,above] {\(a\)} (5) ; 
        
        \draw[->] (6) -- node[draw=none,fill=none,midway,above] {\(e\)} (7); 
        \draw[dashed,->](6) to [out=60,in=120,looseness=5] node[draw=none,fill=none,above] {\(d\)} (6) ;
        \draw[dotted, ->](7) to [out=60,in=120,looseness=5] node[draw=none,fill=none,above] {\(c\)} (7) ; 
    \end{tikzpicture}
    \caption{The 2-substring graph for \(\{abbbcd, bca, dec\}\).}
    \label{example:2sub}
\end{figure}
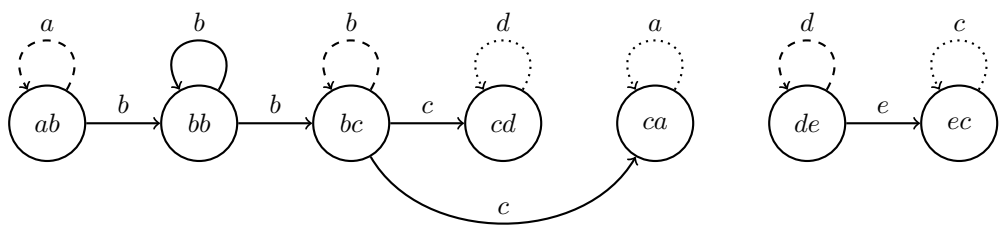

\Cref{example:2sub} shows the 2-substring graph of a set of strings \(\{abbbcd, bca, dec\}\), where the solid lines are from \(E_1\), the dashed lines are the prefix self-loops from \(E_2\) and the dotted lines are the suffix self-loops from \(E_3\). 
Note that the 2-substring graph does not capture words of length 1, as they do not have a substring of length 2 and therefore no associated vertex. 
However, these are not relevant to our problem. 
In a preprocessing step, we can remove the word of length 1 if its symbol also occurs in a longer string, or we have to add it to the solution if this is the only occurrence of the symbol. 

The 2-attractor problem as a special case of the \(k\)-attractor problem is in NP \cite{DBLP:conf/stoc/KempaP18}.
To show NP-hardness, we essentially consider the reduction for colorful edge cover and find an assignment of symbols to the edges such that the resulting graph is a 2-substring graph representing a set of strings. 
The key idea is that this 2-substring graph with the edge labels interpreted as a set of colors has a colorful edge cover of a fixed size \(p\) if and only if the underlying set of strings has a \(k\)-attractor of size \(p\). 
Towards this step, we assign a 2-substring to each vertex created in the proof of \Cref{theorem:cec} and exchange colors with symbols. 
It is vital that each vertex has a unique label so that each vertex uniquely corresponds to a 2-substring. 
For that reason, we need to introduce more symbols and pay attention that they are all covered. 
The 2-substring graph representation of the constructed gadgets is shown in \Cref{reduction:att}. 
Comparing this to \Cref{reduction:cec} shows the similarities in the proofs. 

Note that we do not give a reduction from the colorful edge cover problem to the 2-attractor problem, but use the structure of the colorful edge cover to give a reduction directly from \((3,B2)\)-SAT to the 2-attractor problem on sets.
Combined with \Cref{lemma:setequiv}, this shows the NP-completeness of the 2-attractor problem.

\begin{theorem}\label{theorem:att}
  The 2-attractor problem is NP-complete.
\end{theorem}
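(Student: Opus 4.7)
The plan is to adapt the proof of \Cref{theorem:cec} by directly constructing a set of strings whose 2-substring graph realizes, as a labeled graph, the gadget graph produced there from a given \((3,B2)\)-SAT formula \(\phi\). Membership of the 2-attractor problem in NP is already established in \cite{DBLP:conf/stoc/KempaP18}, so the work lies in the reduction. I would reduce from \((3,B2)\)-SAT to the 2-set attractor problem and invoke \Cref{lemma:setequiv} to conclude NP-completeness of the ordinary 2-attractor problem.

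The central observation driving the reduction is a tight correspondence between positions in \(\mathbf{T}\) and edges in \(G(\mathbf{T})\): an interior position \(p\) with letter \(y\) in \(T_i\) corresponds to the edge \((T_i[p-1\dots p],\,T_i[p\dots p+1])\in E_1\) with label \(y\), while first and last positions correspond to edges in \(E_2\) and \(E_3\) labeled by the first or last letter. A position covers the unique 2-substring(s) containing it and the singleton letter at it, so a set of positions is a 2-set attractor of \(\mathbf{T}\) iff the corresponding edge set covers every vertex of \(G(\mathbf{T})\) and uses every label in \(\Sigma\). Interpreting \(\sigma\) as a color function, 2-set attractors of size \(p\) are in bijection with colorful edge covers of size \(p\) of \(G(\mathbf{T})\), so all that remains is to exhibit a set of strings whose 2-substring graph is exactly the gadget graph of \Cref{theorem:cec} (with colors suitably relabeled).

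Starting from \(\phi\) with \(n\) variables and \(m=4n/3\) clauses, I assign each of the \(O(n)\) gadget vertices of \Cref{reduction:cec} a distinct pair of letters from a polynomial-size alphabet so that adjacent vertices \(xy\) and \(yz\) share the intended middle letter \(y\). For the two variable-color edges of a variable \(x_a\), the label \(a\) is chosen as a single fresh letter used only for the two literal-literal edges, so covering the color/letter \(a\) forces one of these edges into any 2-set attractor, mirroring the colorful-cover constraint. The color-\(0\) edges are implemented using letters that are unique to each such edge; because each of these letters already appears, through the \(E_2,E_3\) loops and interior edges, on exactly the positions of a single string path, no joint constraint across gadgets is created. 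Each six-vertex path is then realized as one string in \(\mathbf{T}\); its prefix and suffix positions supply precisely the self-loops of \Cref{reduction:att}. Provided the construction is clean, the cost accounting of \Cref{theorem:cec} transfers verbatim, yielding a 2-set attractor of size \(n+|\phi|=5n\) iff \(\phi\) is satisfiable.

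The main obstacle is \emph{faithfulness} of this realization: I must choose labels so that (i) every adjacency in the target gadget corresponds to an actual 3-substring in some \(T_i\), (ii) no unintended 3-substring arises anywhere in \(\mathbf{T}\) creating a spurious \(E_1\) edge between or within gadgets, and (iii) the unavoidable prefix/suffix self-loops from \(E_2,E_3\) coincide exactly with the self-loops used in \Cref{reduction:att}. Keeping color-\(0\) letters disjoint across positions and using a single shared letter for each variable color \(a\) prevents collisions at the 3-gram level; the only 2-grams shared between different strings are the literal-vertices that must share the letter \(a\), and one checks that the resulting overlaps contribute no extra edges beyond those drawn in the gadget. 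With this realization verified, \Cref{lemma:setequiv} turns the set-attractor lower bound into an NP-hardness proof for the 2-attractor problem, completing the argument.
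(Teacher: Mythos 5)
Your high-level plan is the same as the paper's: view positions of a string set as edges in the 2-substring graph, observe that a set of positions is a 2-set attractor exactly when the corresponding edges cover all vertices and all labels appear (i.e., it is a colorful edge cover), realize the \((3,B2)\)-SAT gadget of \Cref{theorem:cec} as such a graph, and then finish with \Cref{lemma:setequiv}. However, there is a concrete flaw in the way you choose letters that breaks the reduction.

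You propose to implement the color-\(0\) edges ``using letters that are unique to each such edge.'' But those letters are symbols of \(\Sigma\), and every symbol is a \(1\)-substring that a 2-set attractor must cover. If a color-\(0\) edge is labeled by a letter that occurs nowhere else, every valid solution is forced to include that position. Doing this for all color-\(0\) edges forces the entire color-\(0\) skeleton into every solution; this already covers all clause, intermediate, and literal vertices, so the edge-cover structure becomes vacuous and the optimal size is \(9n\) (or some fixed number) independent of whether \(\phi\) is satisfiable. The reduction no longer distinguishes satisfiable from unsatisfiable formulas. The paper avoids exactly this by reusing symbols: color-\(0\) positions carry clause symbols \(C_i\) (shared across the three gadgets of clause \(c_i\), and covered ``for free'' because any marking covering the 2-substring \(C_iC_i\) also covers the letter \(C_i\)) or literal-position symbols \(L_j,\overline{L}_j\) (only six of them across the entire instance, with their \(1\)-substring coverage discharged by six short auxiliary strings). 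Only the variable symbols \(X_a\) carry a real forcing constraint — each occurs exactly twice, and choosing where to cover it encodes the truth assignment. Consequently the target size is \(5n+6\), not \(5n\), and the lower-bound accounting (6 for the auxiliary strings, \(4n\) for the pairwise disjoint substrings \(C_iL_j\), \(L_jC_i\), \(C_i\overline{L}_j\), \(\overline{L}_jC_i\), and \(n\) for the symbols \(X_a\)) is a fresh argument, not a verbatim transfer of the counting in \Cref{theorem:cec}. You would need to replace the ``unique letters per color-\(0\) edge'' choice with a scheme that, like the paper's, makes those letters automatically covered by the mandatory vertex/2-substring constraints, and redo the size calculation for the resulting instance.
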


\begin{proof}
  We again start with a \((3,B2)\)-SAT formula \(\phi\).
  We create a set of strings based on an alphabet \(\Sigma = \mathbf{C} \cup \mathbf{L} \cup \overline{\mathbf{L}} \cup \mathbf{X} = \{C_1, \dots, C_m\} \cup \{L_1, L_2 , L_3\} \cup \{\overline{L}_1, \overline{L}_2, \overline{L}_3\} \cup \{X_1, \dots, X_n\}\). 
  Note that the symbols for the literals \(\mathbf{L},\overline{\mathbf{L}}\) only indicate the position of the literal in its clause and whether it is negated, but not the clause itself.
  
  We now construct a set of strings \(\mathbf{T}\).
  For every variable \(X_a\) that appears positively in clauses \(c_{i_1}\) and \(c_{i_2}\) at literals \(l_{i_1,j_1}\) and \(l_{i_2,j_2}\) respectively with \(i_1 \leq i_2\) (and \(j_1 < j_2\) if \(i_1 = i_2\)), we add a string \(C_{i_1} C_{i_1} L_{j_1} X_a L_{j_2} C_{i_2} C_{i_2}\).
  For every variable \(X_a\) that appears negatively in clauses \(c_{i_3}\) and \(c_{i_4}\) at literals \(l_{i_3,j_3}\) and \(l_{i_4,j_4}\) respectively with \(i_3 \leq i_4\) (and \(j_3 < j_4\) if \(i_3 = i_4\)), we add a string \(C_{i_3} C_{i_3} L_{j_3} X_a L_{j_4} C_{i_4} C_{i_4}\). 
  We also add six auxiliary strings 
  \(L_1 L_1, \overline{L}_1 \overline{L}_1, L_2 L_2, \overline{L}_2 \overline{L}_2, L_3 L_3, \overline{L}_3 \overline{L}_3\)
  to make sure all symbols \(L_j\) and \(\overline{L}_j\) are covered. 
  
  This construction is shown in \Cref{reduction:att} as subgraphs of the 2-substring graph. Note that the clause vertices of the form \(C_i C_i\) have two more adjacent edges that are not shown here, and the auxiliary strings are not shown.
  The dashed lines form a solution representing a false assignment to the variable \(x_a\), whereas the solid lines represent setting the variable to true. Note that there is always an optimal solution that does not use the self-loops at vertices of the form \(C_i C_i\) which are represented by dotted lines.
  
  The set of 2-substrings that need to be covered consists of \(C_{i} C_{i}\), \(C_{i} L_1\),\(C_{i} L_2\) and \(C_{i} L_3\) for each clause \(C_i\), as well as
  \(L_{j_1} X_a\), \(X_a L_{j_2}\), \(\overline{L}_{j_3} X_a\), \(X_a \overline{L}_{j_4}\) for each variable \(X_a\).
  Note that each of these 2-substrings except for \(C_i C_i\) appears only once in \(\mathbf{T}\), so they need to be covered at that occurrence.
  The auxiliary strings each add one unique 2-substring \(L_j L_j\) or \(\overline{L}_j \overline{L}_j\) for \(j \in \{1,2,3\}\) that also need to be covered in the respective string.
  Of course, all 1-substrings, i.e. \(\Sigma\), also need to be covered.

  We claim that \(\mathbf{T}\) has a 2-set attractor of size \(n+|\phi|+6 = 5n+6\) if and only if \(\phi\) is satisfiable. 
  The proof follows the same arguments as the proof of \Cref{theorem:cec} as the 2-substring graph \(G(\mathbf{T})\) has a colorful edge cover if and only if \(\mathbf{T}\) has a 2-set attractor.

  Assume that a 2-set attractor of size \(5n+6\) for \(\mathbf{T}\) exists. 
  We first show that the \(n\) symbols \(\mathbf{X}\) corresponding to the variables, which all appear twice, are covered exactly once each. 
  The substrings \(L_j L_j\) and \(\overline{L}_j \overline{L}_j\) for \(j \in \{1,2,3\}\) are unique to the six auxiliary strings, so they always induce a cost of six and ensure that all \(L_i\),\(\overline{L}_i\) are covered independent of the remaining strings.
  It is also necessary to expend \(4n\) markings to cover the substrings of the form \(C_i L_j\), \(L_j C_i\), \(C_i \overline{L}_j\) and \(\overline{L}_j C_i\), as each of these are pairwise non-overlapping with each other.
  The remaining \(n\) markings are then needed to cover all \(n\) symbols \(\mathbf{X}\), so each is covered once.

  We can therefore deduce a truth assignment by considering where each symbol \(X_a\) corresponding to a variable is covered. We show that this assignment satisfies \(\phi\).
  Consider any substring \(C_i C_i\). 
  Without loss of generality it is covered at the position adjacent to some \(L_j\) or \(\overline{L}_j\), not at the position corresponding to the start or the end of a string. 
  Therefore, the substrings and markings are of the form \(C_i \underline{C_i} L_j\), \(L_j \underline{C_i} C_i\) or \(C_i \underline{C_i} \overline{L}_j\), \(\overline{L}_j \underline{C_i} C_i\). 
  This also covers the substrings of the form \(C_i L_j\), \(L_j C_i\), \(C_i \overline{L}_j\) and \(\overline{L}_j C_i\). By our counting argument, each of these substrings is covered only once, so \(L_j\) respectively \(\overline{L}_j\) is not marked. 
  To then cover \(L_j X_a\), \(X_a L_j\) or \(\overline{L}_j X_a\), \(X_a\) must be marked, which means the variable \(x_a\) is assigned a truth value such that it satisfies \(c_i\). 
  This holds for all \(c_i\), so \(\phi\) is satisfied.
  
  Assume \(\phi\) is satisfiable by some truth assignment of the variables \(\mathbf{X}\).
  If \(x_a\) is set to true, mark its two strings by
  \[C_{i_1} \underline{C_{i_1}} L_{j_1} \underline{X_a} L_{j_2} \underline{C_{i_2}} C_{i_2}, 
  C_{i_3} C_{i_3}\underline{\overline{L}_{j_3}} X_a \underline{\overline{L}_{j_4}} C_{i_4} C_{i_4}\]
  otherwise mark
  \[C_{i_1} C_{i_1} \underline{L_{j_1}} X_a \underline{L_{j_2}} C_{i_2} C_{i_2}, 
  C_{i_3} \underline{C_{i_3}} \overline{L}_{j_3} \underline{X_a} \overline{L}_{j_4} \underline{C_{i_4}} C_{i_4}.\]
  Both ways of marking the strings cover all eight substrings unique to \(x_a\).
  The solid edges in \Cref{reduction:att} show the included positions in the 2-substring graph for a positively assigned variable, and the dashed lines refer to a negative assignment.
  Each substring \(C_i C_i\) is covered in the substring corresponding to the variable satisfying it.
  Further, each symbol of \(\Sigma\) is also covered. 
  Each \(C_i\) is covered just as \(C_i C_i\) is covered, each \(L_j\) is covered due to the six auxiliary strings, and \(X_a\) is covered by definition of our chosen markings.

  In total, given a formula with \(m\) clauses and \(n\) variables,
  we compute for a set of \(2n+6\) strings whether it has a 2-set attractor of size \(5n+6\).
  Each string has a size of 7 except for the auxiliary strings of size 2, and all strings in total have length \(14n+12\).
  The reduction can be computed naively in time \(\mathcal{O}(n^2)\). 
  By \Cref{lemma:setequiv}, the set of strings can be condensed into a single string of linear size in polynomial time. 
\end{proof}

\begin{figure}
    \centering
    \begin{tikzpicture}[node distance={25mm}, thick, main/.style = {draw, circle, minimum size=10mm}]
        \node[main] (1) {$C_{i_1} C_{i_1}$}; 
        \node[main] (2)[right of=1] {$C_{i_1} L_{j_1}$};
        \node[main] (3)[right of=2] {$L_{j_1} X_a$};
        \node[main] (4)[right of=3] {$X_a L_{j_2}$}; 
        \node[main] (5)[right of=4] {$L_{j_2} C_{i_2}$};
        \node[main] (6)[right of=5] {$C_{i_2} C_{i_2}$};
        
        \draw[->] (1) -> node[draw=none,fill=none,midway,above] {\(C_{i_1}\)} (2); 
        \draw[dashed,->] (2) -- node[draw=none,fill=none,midway,above] {\(L_{j_1}\)} (3);  
        \draw[->] (3) -- node[draw=none,fill=none,midway,above] {\(X_a\)} (4); 
        \draw[dashed,->] (4) -- node[draw=none,fill=none,midway,above] {\(L_{j_2}\)} (5); 
        \draw[->] (5) -- node[draw=none,fill=none,midway,above] {\(C_{i_2}\)} (6); 
        
        \node[main] (7)[below of=1, node distance ={35mm}] {$C_{i_3} C_{i_3}$};
        \node[main] (8)[right of=7] {$C_{i_3} \overline{L}_{j_3}$};
        \node[main] (9)[right of=8] {$\overline{L}_{j_3} X_a$};
        \node[main] (10)[right of=9] {$X_a \overline{L}_{j_4}$};
        \node[main] (11)[right of=10] {$\overline{L}_{j_4} C_{i_4}$};
        \node[main] (12)[right of=11] {$C_{i_4} C_{i_4}$};
        
        \draw[dashed,->] (7) -- node[draw=none,fill=none,midway,above] {\(C_{i_3}\)} (8);
        \draw[->] (8) -- node[draw=none,fill=none,midway,above] {\(\overline{L}_{j_3}\)} (9); 
        \draw[dashed,->] (9) -- node[draw=none,fill=none,midway,above] {\(X_a\)} (10) ; 
        \draw[->] (10) -- node[draw=none,fill=none,midway,above] {\(\overline{L}_{j_4}\)} (11); 
        \draw[dashed,->] (11) -- node[draw=none,fill=none,midway,above] {\(C_{i_4}\)} (12);

        \begin{pgfinterruptboundingbox}
            \draw[dotted,->](1) to [out=60,in=120,looseness=5] node (box) [draw=none,fill=none,above] {\(C_{i_1}\)} (1) ;
            \draw[dotted, ->](6) to [out=60,in=120,looseness=5] node[draw=none,fill=none,above] {\(C_{i_2}\)} (6);
            \draw[dotted,->](7) to [out=60,in=120,looseness=5] node[draw=none,fill=none,above] {\(C_{i_3}\)} (7);
            \draw[dotted,->](12) to [out=60,in=120,looseness=5] node[draw=none,fill=none,above] {\(C_{i_4}\)} (12); 
        \end{pgfinterruptboundingbox}
        \node (dummy) at (box) {};
    \end{tikzpicture}
    \caption{Gadgets for a variable \(x_a\) in the 2-attractor reduction in the 2-substring graph.}
    \label{reduction:att}
\end{figure}
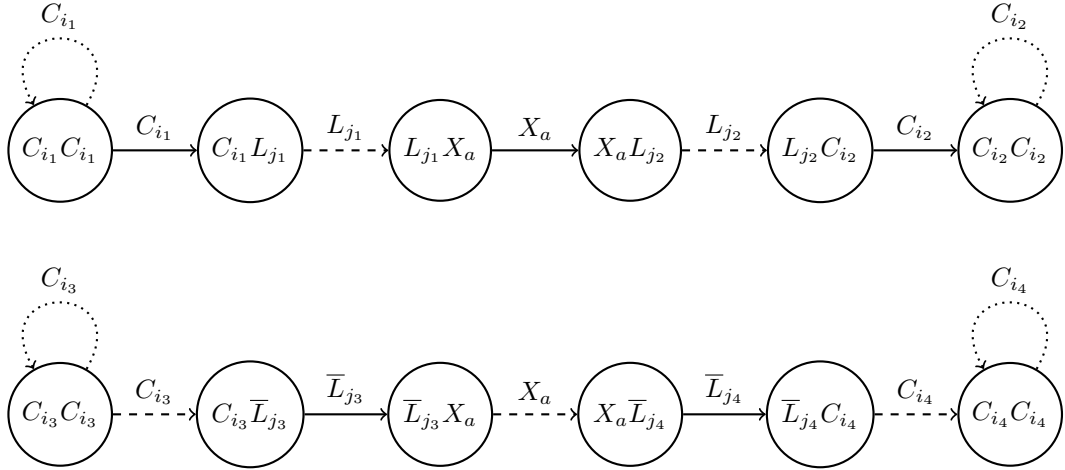

This reduction can be used for all \(k \geq 2\) and for the general attractor problem, as there is a delimiter symbol or unique substring at least every 3 symbols. 
Any valid \(k\)-attractor then needs to put a marking every 3 symbols, such that any possibly uncovered substring has length at most 2.
This enforces that there is a 2-attractor of any fixed size \(p\) if and only if there is a \(k\)-attractor of size \(p\) for any \(k \geq 3\).

\section{APX-Hardness}\label{sec:APX}

Our reduction to prove \Cref{{theorem:att}} also suffices to show the APX-hardness of the \(k\)-attractor problem for \(k \geq 2\), which was shown for \(k \geq 3\) by Kempa and Prezza \cite{DBLP:conf/stoc/KempaP18}.
The same paper also shows containment in APX for constant \(k\).
We also slightly improve the explicit lower bound to which \(k\)-attractor cannot be approximated. 
To this end, we analyse the behavior of the reduction's output string for unsatisfiable formulas.

\begin{lemma}\label{lemma:unsatgap}
    For a MAX-\((3,B2)\)-SAT formula \(\phi\) with \(m\) clauses and an optimal assignment satisfying \(m-u\) clauses, \(u \geq 0\), the optimal attractor for \(T(\phi)\) has size between \(\frac{21}{4}m + 11 + \lceil u/2 \rceil\) and \(\frac{21}{4}m + 11 + u\).
\end{lemma}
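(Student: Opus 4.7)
The plan is to prove the two bounds separately, relating attractor size to MAX-$(3,B2)$-SAT assignments of $\phi$ and leveraging the construction and counting arguments of \Cref{theorem:att}. Throughout, the baseline value $\frac{21}{4}m + 11 = 7n + 11$ (using $3m = 4n$) accounts for the $2n + 5$ forced delimiter markings from \Cref{lemma:setequiv} together with the $5n + 6$ set-attractor markings forced in \Cref{theorem:att}: the $4n$ markings for the pairwise non-overlapping unique $CL/LC$ substrings, the $n$ markings for the variable symbols $X_a$, and the $6$ markings for the auxiliary strings.

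For the upper bound, I would fix an optimal assignment $\sigma^*$ satisfying $m - u$ clauses, apply the construction of \Cref{theorem:att} with $\sigma^*$, and stitch the resulting set of strings via \Cref{lemma:setequiv}. This yields an attractor of size $\frac{21}{4}m + 11$ whose $C_iC_i$ coverings are exactly those corresponding to $\sigma^*$-satisfied clauses. For each of the $u$ unsatisfied clauses $c_i$, add one extra marking adjacent to any of the three occurrences of $C_iC_i$, giving an attractor of size $\frac{21}{4}m + 11 + u$.

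For the lower bound, take any attractor of size $s = \frac{21}{4}m + 11 + e$ and let $t_a \geq 0$ denote the excess on variable $x_a$'s two $7$-symbol strings; then $e \geq \sum_a t_a$ (excess on the auxiliary strings does not help cover any $CC$ substring and only strengthens the inequality). Derive an assignment $\sigma$ by setting $x_a$ true if more $CC$ slots are covered on its positive string than on its negative string (false otherwise). The central structural claim is that with $5 + t_a$ markings on $x_a$'s two strings, at most $\min(2, 2t_a)$ of the covered $CC$ slots lie on the side opposite to $\sigma(x_a)$: for $t_a \geq 1$ this holds trivially since each side has only two $CC$ slots, while for $t_a = 0$ the crucial point is that covering any $CC$ slot on a side, together with the four unique $CL/LX/XL/LC$ substrings on that side, already requires at least three markings, so with only $5$ markings total no mixed-side $CC$ coverage is possible.

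Given the claim, every clause is covered by the attractor; if $\sigma$ does not satisfy $c_i$, then the marking covering $C_iC_i$ must lie in some variable's string on the side opposite to $\sigma$'s assignment of that variable. Distinct unsatisfied clauses use distinct opposite-side slots, so the number of $\sigma$-unsatisfied clauses is at most $\sum_a \min(2, 2t_a) \leq 2e$; by optimality of $u$ and integrality of $e$, we obtain $e \geq \lceil u/2 \rceil$. The main obstacle will be the $t_a = 0$ case of the structural claim, which reduces to a case analysis of the valid 5-marking patterns on the two length-$7$ variable strings subject to covering the unique $CL/LX/XL/LC$ substrings and the central $X_a$ symbol, confirming that any such pattern's $CC$-covering markings all lie within a single string.
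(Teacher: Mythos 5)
Your proof is correct and follows the same overall strategy as the paper's: the $\frac{21}{4}m + 11 = 7n + 11$ baseline from the forced delimiter, auxiliary, and per-variable markings; the upper bound by marking one extra position per unsatisfied clause; and the lower bound resting on the observation that a single extra marking can repair at most two \(C_iC_i\) substrings (both within one variable's string). What you do differently, and more carefully, is the lower-bound direction. The paper fixes an optimal assignment, observes that remarking one variable string from the cheap pattern to \(C_{i}\underline{C_{i}}\overline{L}\underline{X_a}\overline{L}\underline{C_{i'}}C_{i'}\) covers two previously uncovered \(CC\) slots at unit cost, and then asserts that doing this pairing is optimal (``otherwise the assignment we started with was not optimal''), leaving it implicit why an arbitrary attractor cannot beat \(\lceil u/2\rceil\) by some other rearrangement. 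You instead start from an arbitrary attractor of excess \(e\), define the per-variable excess \(t_a\), extract an assignment \(\sigma\) from the side on which each variable's \(CC\) slots are covered, and prove the structural claim that at most \(\min(2,2t_a)\) opposite-side \(CC\) occurrences can be covered --- with the \(t_a=0\) case justified by the small case analysis that covering any \(CC\) slot on a string together with that string's four unique length-2 substrings needs three markings, forcing a \(2\)-\(3\) split in which the \(2\)-marking string is necessarily \(\{3,5\}\) and covers no \(CC\). This yields \(u \le 2e\) cleanly. In short: same key mechanism, but your write-up makes explicit the extraction of a certificate assignment from an arbitrary attractor, which is the step the paper leaves informal.
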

\begin{proof}
    A MAX-\((3,B2)\)-SAT formula \(\phi\) with \(m\) clauses contains \(n=(3/4)m\) variables, thus the resulting set of strings is of size \(2m+6\) including auxiliary strings, inducing \(2m+5\) delimiter symbols to combine those strings into a single string \(T\).
    Each variable induces a cost of at least 5 to cover its unique substrings, also each auxiliary string and delimiter induces a cost of 1.
    In total, we obtain a lower bound of \(5n + 6 + 2n + 5 = \frac{21}{4}m + 11\) that can be matched if and only if \(\phi\) is satisfiable with the markings given in the proof of \Cref{theorem:att}.

    If \(\phi\) is not satisfiable, this marking still yields the most efficient way to cover the unique substrings.
    Then, all substrings are covered except for \(C_i C_i\) (and \(C_i\), which will always be covered implicitly) for clauses \(c_i\) that are not covered by the assignment.
    This gives \(u\) many 2-substrings that are not covered, which can be covered one by one using \(u\) additional positions, yielding the upper bound.
    
    Consider a variable \(x_a\) such that in the optimal assignment two of the four clauses it appears in are not satisfied.
    Then, without loss of generality \(x_a\) is set to true and appears negatively in clauses \(c_{i_1}, c_{i_2}\) which are not satisfied.
    The corresponding string is then \(C_{i_1} C_{i_1} \underline{\overline{L}_{j_1}} X_a \underline{\overline{L}_{j_2}} C_{i_2} C_{i_2}\) but remarking it to \(C_{i_1} \underline{C_{i_1}} \overline{L}_{j_1} \underline{X_a} \overline{L}_{j_2} \underline{C_{i_2}} C_{i_2}\) covers both \(C_{i_1} C_{i_1}\) and \(C_{i_2} C_{i_2}\) with just one additional position.
    If this can be done for all clauses, only \(u/2\) additional markings are needed.
    Each variable still has to induce a cost of at least 5 and now induces a cost of at most 6, so this is optimal, otherwise the assignment we started with was not optimal.
\end{proof}

In \cite{APXbound}, Berman, Karpinski and Scott show the APX-hardness of MAX-\((3,B2)\)-SAT by a reduction from the MAX-E3-Lin-2 problem on linear equations, which was studied by Håstad \cite{hastad}.
We extend the given reduction to the \(k\)-attractor problem.

\begin{theorem}\label{theorem:apx}
    For every \(k \geq 2\) and \(0 < \varepsilon < 1\), it is NP-hard to approximate the \(k\)-attractor problem to within an approximation ratio smaller than \((10669-\varepsilon)/10668\).
\end{theorem}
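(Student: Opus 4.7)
The plan is to combine the gap-preserving reduction underlying Theorem~\ref{theorem:att} (sharpened by Lemma~\ref{lemma:unsatgap}) with the explicit inapproximability threshold for MAX-\((3,B2)\)-SAT from Berman, Karpinski and Scott~\cite{APXbound}. Their result, obtained by reducing from Håstad's MAX-E3-Lin-2, states that for every \(\varepsilon' > 0\) it is NP-hard to distinguish satisfiable \((3,B2)\)-SAT instances from instances where at most \((1015/1016 + \varepsilon')m\) of the \(m\) clauses can be simultaneously satisfied. Equivalently, in the no-instances the number of unsatisfied clauses in any assignment satisfies \(u \geq m/1016 - \varepsilon' m\).

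First, I would invoke the construction \(\phi \mapsto T(\phi)\) from Theorem~\ref{theorem:att} (compressed into a single string via Lemma~\ref{lemma:setequiv}) and apply Lemma~\ref{lemma:unsatgap}. In the yes-case the optimal 2-attractor has size exactly \((21/4)m + 11\). In the no-case the lower bound of Lemma~\ref{lemma:unsatgap} yields an optimum of at least
\[
\tfrac{21}{4}m + 11 + \tfrac{u}{2} \;\geq\; \tfrac{21}{4}m + 11 + \tfrac{m}{2032} - \tfrac{\varepsilon'}{2} m.
\]
Taking the ratio of these two quantities and letting \(m \to \infty\) (so that the additive constant \(11\) becomes negligible) gives
\[
\frac{\tfrac{21}{4}m + 11 + \tfrac{m}{2032} - \tfrac{\varepsilon'}{2}m}{\tfrac{21}{4}m + 11}
\;\longrightarrow\; 1 + \frac{4}{21}\!\left(\tfrac{1}{2032} - \tfrac{\varepsilon'}{2}\right)
\;=\; 1 + \tfrac{1}{10668} - \tfrac{2\varepsilon'}{21}.
\]
For any prescribed \(\varepsilon \in (0,1)\), choosing \(\varepsilon'\) small enough makes this ratio at least \((10669-\varepsilon)/10668\). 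Hence any polynomial-time algorithm approximating minimum 2-attractor below this threshold would decide \((3,B2)\)-SAT gap-instances, which is NP-hard.

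For \(k \geq 3\), I would appeal directly to the remark following Theorem~\ref{theorem:att}: since \(T(\phi)\) contains a unique 2-substring at least every three positions, every valid \(k\)-attractor must already place a marker within every window of three symbols, so substrings of length \(> 2\) are automatically covered. Therefore the minimum \(k\)-attractor size equals the minimum 2-attractor size on these instances, and the same gap argument transfers verbatim.

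The main obstacle is just the bookkeeping: making sure the additive constants (the \(+11\), the auxiliary strings, the delimiters introduced by Lemma~\ref{lemma:setequiv}) vanish in the asymptotic ratio and that the arithmetic \(4/(21 \cdot 2032) = 1/10668\) lines up with the stated constant. A minor care-point is to verify that Lemma~\ref{lemma:unsatgap}'s lower bound \(u/2\) is indeed the correct quantity to use (as opposed to the upper bound \(u\)); this is exactly what is needed because we want to show the no-instance attractor is \emph{at least} so large, and \(u/2\) is a worst-case lower bound that holds for every MAX-\((3,B2)\)-SAT gap-instance regardless of how the unsatisfied clauses are distributed among variables.
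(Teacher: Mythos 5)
Your proposal follows the paper's proof in all essentials: you invoke Lemma~\ref{lemma:unsatgap} to translate the MAX-\((3,B2)\)-SAT gap into an attractor-size gap, you appeal to the Berman--Karpinski--Scott threshold, the arithmetic \(4/(21\cdot 2032)=1/10668\) matches, and you dispatch \(k\geq 3\) via the remark after Theorem~\ref{theorem:att} that a marker is forced within every window of three symbols, exactly as the paper does.

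One inaccuracy deserves attention. You quote Berman--Karpinski--Scott as showing that it is NP-hard to distinguish \emph{satisfiable} \((3,B2)\)-SAT instances from instances with at most \((1015/1016+\varepsilon')m\) simultaneously satisfiable clauses. The result, as cited in the paper, has \emph{imperfect completeness}: the yes-case consists of instances with \(1016n\) clauses of which at least \((1016-\epsilon)n\) are satisfiable, so up to \(\epsilon n\) clauses may remain unsatisfied even in the yes-case. Consequently you cannot take the yes-case optimum to be exactly \(\frac{21}{4}m+11\); you should instead use the upper bound \(\frac{21}{4}m+11+u\leq\frac{21}{4}m+11+\epsilon n\) from Lemma~\ref{lemma:unsatgap}. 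This leaves the limiting ratio unchanged (the extra \(\epsilon n\) term disappears as \(\epsilon\to 0\)), so the conclusion and the target constant \((10669-\varepsilon)/10668\) survive, but as written your yes-case estimate rests on a perfect-completeness version of the BKS theorem that is not what they prove. Your observation that the lower bound \(u/2\) (rather than the upper bound \(u\)) of Lemma~\ref{lemma:unsatgap} is the correct quantity for bounding the no-case optimum from below is exactly right.
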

\begin{proof}
    Berman, Karpinski and Scott \cite{APXbound} show that it is NP-hard to distinguish MAX-\((3,B2)\)-SAT instances with \(1016n\) clauses of which at least \((1016 - \epsilon)n\) are satisfiable, and instances with \(1016n\) clauses of which at most \((1015 + \epsilon)n\) are satisfiable.

    Let \(\phi\) be a \((3,B2)\)-SAT formula with \(1016n\) clauses, and \(T(\phi)\) the string resulting from applying the reduction in \Cref{theorem:att} to this formula.
    The formula \(\phi\) has \(1016n \cdot 3/4 = 762n\) variables and thus \(T(\phi)\) consists of \(2 \cdot 762n + 6\) substrings with \(2 \cdot 762n + 5\) delimiter symbols between them.
    If \(\phi\) is satisfiable, \(T(\phi)\) has an optimal \(k\)-attractor of size \((5 \cdot 762n + 6) + (2 \cdot 762n + 5) = 5334n + 11\) by \Cref{lemma:unsatgap}.
    Further, if we can satisfy at least \((1016 - \epsilon)n\) clauses in \(\phi\), we need at most 
    \[(5334 + \epsilon)n + 11 = (10668 + 2\epsilon)n/2 + 11\] 
    many positions to find a 2-attractor for \(T(\phi)\).
    Otherwise, we can satisfy at most \((1015 + \epsilon)n\) clauses and thus need at least 
    \(5334n + \frac{1-\epsilon}{2}n + 11 = (10669 - \epsilon)n/2 + 11 > (10669 - 2\epsilon)n/2 + 11\) 
    many positions to find a 2-attractor for \(T(\phi)\).

    If we could approximate \(k\)-attractor better than \((10669-\varepsilon)/10668\), we could plug in \(\phi\) and see if we get a result within 
    \begin{align*}
        \frac{10669-\varepsilon}{10668}\left(\frac{(10668 + 2\epsilon)}{2}n + 11\right) = 
        \left(\frac{10669}{2} + \frac{10669\epsilon}{10668}  - \frac{\varepsilon}{2} - \frac{2\varepsilon\epsilon}{10668}\right) n + \frac{10669 \cdot 11}{10668}
    \end{align*}
    and accept if and only if this is true.
    For large enough \(n\) and \(\varepsilon\), it holds that
    \begin{align*}
        & & \left(\frac{10669}{2} + \frac{10669\epsilon}{10668}  - \frac{\varepsilon}{2} - \frac{2\varepsilon\epsilon}{10668}\right) n + \frac{10669 \cdot 11}{10668}
        &< \frac{10669-2\epsilon}{2}n + 11 \\
        & \iff & \frac{21337\epsilon}{10668} - \frac{2\varepsilon\epsilon}{10668} + \frac{11}{10668n}
        &< \frac{\varepsilon}{2}\\
    \end{align*}
    and thus we are able to distinguish whether \((1016 - \epsilon)n\) clauses or \((1015 + \epsilon)n\) clauses in \(\phi\) are satisfiable, which is NP-hard.
    Note that \(\varepsilon\) can converge to 0 as \(\epsilon\) converges to 0, thus the proof works for all \(\varepsilon > 0\).
\end{proof}

A similar construction can be used to show that the colorful edge cover problem is APX-hard and cannot be approximated by a factor smaller than \((7621 - \varepsilon)/7620\) for \(0 < \varepsilon < 1\).

\section{Conclusion}
In this paper, we have answered the open problem of the complexity of the \(2\)-attractor problem and discussed the implicit APX-hardness that results from our reduction. 
Additionally, we introduced a more general variation of the \(k\)-attractor problem, the \(k\)-set attractor problem, to make our reduction more convenient. 
Moreover, motivated by the previous reductions between the \(k\)-set cover problem and the \(k\)-attractor problem, we introduced the colorful edge cover problem. 
Although it is not contained in the reduction chain for the main result, it shows where the sharp \(2\)-attractor problem and the \(2\)-attractor problem differ in their complexity. 

In general, adding colorfulness to a problem solvable in polynomial time, like matching, network flow or spanning tree can help to find a different perspective on problems of unresolved complexity that are close to the mentioned problems, but not properly modeled by their unmodified variants. 
In the best case, it helps to understand where the borderline of complexity is and which combination of constraints make a problem hard. 
Colorful problems may also be of interest in the context of parameterized complexity or approximability. 

\bibliography{main}


\end{document}